\newfont{\teneufm}{eufm10}
\newfont{\seveneufm}{eufm7}
\newfont{\fiveeufm}{eufm5}
 \patchcmd\Gread@eps{\@inputcheck#1 }{\@inputcheck"#1"\relax}{}{}
\newtheorem{remark}{Remark}
\newtheorem{thm}{Theorem}
\newtheorem{lem}[thm]{Lemma}
\newtheorem{cor}[thm]{Corollary}
\newtheorem{rem}[thm]{Remark}
\newcommand{\Tr}{{\rm Tr}}
\newcommand{\Trn}{{\rm Tr}_n}
\def\+{\oplus}
\def\cN{{\mathcal N}}
\def\F{{\mathbb F}}
\def\00{{\bf 0}}
\def\11{{\bf 1}}
\def\+{\oplus}
\def\\{\cr}
\def\({\left(}
\def\){\right)}
\newcommand{\cardinality}[1]{\# #1}
\providecommand{\newoperator}[3]{%
  \newcommand*{#1}{\mathop{#2}#3}}
\newoperator{\FD}{\mathrm{FD}}{\nolimits}
\begin{document}
\title{\bf The $c$-differential behavior of the inverse function under the $EA$-equivalence}
\author{\Large Pantelimon~St\u anic\u a, Aaron Geary\\ \\
Applied Mathematics Department, \\
Naval Postgraduate School, Monterey, USA. \\
E-mail: pstanica@nps.edu}

\maketitle

\begin{abstract}
While the classical differential uniformity ($c=1$) is invariant under the CCZ-equivalence, the newly defined~\cite{EFRST20} concept of $c$-differential uniformity in general is not invariant under EA or CCZ-equivalence,   as was observed in~\cite{SPRS20}. In this paper, we find an intriguing behavior of the inverse function, namely, that adding some appropriate linearized monomials increases the $c$-differential uniformity (cDU) significantly, for some~$c$. For example, adding the linearized monomial $x^{2^d}$ to $x^{2^n-2}$, where $d$ is the largest nontrivial divisor of $n$, increases the mentioned $c$-differential uniformity from~$2$ or $3$ (for $c\neq 0,1$) to $\geq 2^{d}+2$, which in the case of the inverse function (as used in the AES) on $\F_{2^8}$ is a significant value of~$18$.  We  consider the case of perturbations via more general linearized polynomials and give bounds for the cDU  based upon character sums. We further provide some computational results on other known Sboxes.
\end{abstract}
{\bf Keywords:} 
Boolean and 
$p$-ary functions, 
$c$-differentials,  
differential uniformity, 
perfect and almost perfect $c$-nonlinearity,
perturbations,
characters
\newline
{\bf MSC 2020}: 06E30, 11T06, 94A60, 94C10.


\section{Introduction and basic definitions}

The authors of~\cite{BCJW02}  modified/extended the differential attack on some ciphers (for example,  a variant of the IDEA cipher) that use modular multiplication as a primitive operation by using a new type of differential, namely $\left(F(cx),F(x) \right)$ in lieu of $\left(F(x+a),F(x)\right)$.  
Those authors propose that one should look  at other types of differentials for a Boolean (vectorial) function $F$, not only the usual $\left(F(x+a),F(x)\right)$.   

  Inspired by that challenge, we defined in~\cite{EFRST20} a multiplier differential and difference distribution table   (in any characteristic) and later we extended the notion of boomerang connectivity table  in~\cite{S20}. We characterized some of the known perfect nonlinear functions and the inverse function through this new concept.  We also characterized this concept via the Walsh transforms as Li et al.~\cite{Li19} did for the classical boomerang uniformity. Several papers have been written meanwhile on this concept of $c$-differential uniformity (which, unbeknown to us in~\cite{EFRST20}, generalized the recent~\cite{BT} concept of quasi planarity: a quasi planar function   is simply  a PcN function for $c=-1$) .     

 We will introduce here only some needed notation on Boolean (binary, $p=2$) and $p$-ary functions (where $p$ is an odd prime), and the reader can consult~\cite{Bud14,CH1,CH2,CS17,MesnagerBook,Tok15} for more on these objects.

For a positive integer $n$ and $p$ a prime number, we let $\F_{p^n}$ be the  finite field with $p^n$ elements, and $\F_{p^n}^*=\F_{p^n}\setminus\{0\}$ be the multiplicative group (for $a\neq 0$, we often write $\frac{1}{a}$ to mean the inverse of $a$ in the multiplicative group). We let $\F_p^n$ be the $n$-dimensional vector space over $\F_p$.  We use $\cardinality{S}$ to denote the cardinality of a set $S$ and $\bar z$, for the complex conjugate.
We call a function from $\F_{p^n}$ (or $\F_p^n$) to $\F_p$  a {\em $p$-ary  function} on $n$ variables. For positive integers $n$ and $m$, any map $F:\F_{p^n}\to\F_{p^m}$ (or, $\F_p^n\to\F_p^m$)  is called a {\em vectorial $p$-ary  function}, or {\em $(n,m)$-function}. When $m=n$, $F$ can be uniquely represented as a univariate polynomial over $\F_{p^n}$ (using some identification, via a basis, of the finite field with the vector space) of the form
$
F(x)=\sum_{i=0}^{p^n-1} a_i x^i,\ a_i\in\F_{p^n},
$
whose {\em algebraic degree}   is then the largest Hamming weight of the exponents $i$ with $a_i\neq 0$. 
We let $\Trn:\F_{p^n}\to \F_p$ be the absolute trace function, given by $\displaystyle \Trn(x)=\sum_{i=0}^{n-1} x^{p^i}$ (we will denote it by $\Tr$, if the dimension is clear from the context). 
 
Given a $p$-ary  function $f$, the derivative of $f$ with respect to~$a \in \F_{p^n}$ is the $p$-ary  function
$
 D_{a}f(x) =  f(x + a)- f(x), \mbox{ for  all }  x \in \F_{p^n},
$
which can be naturally extended to vectorial $p$-ary functions.

For an $(n,n)$-function $F$, and $a,b\in\F_{p^n}$, we let $\Delta_F(a,b)=\cardinality{\{x\in\F_{p^n} : F(x+a)-F(x)=b\}}$. We call the quantity
$\delta_F=\max\{\Delta_F(a,b)\,:\, a,b\in \F_{p^n}, a\neq 0 \}$ the {\em differential uniformity} of $F$. If $\delta_F= \delta$, then we say that $F$ is differentially $\delta$-uniform. If $\delta=1$, then $F$ is called a {\em perfect nonlinear} ({\em PN}) function, or {\em planar} function. If $\delta=2$, then $F$ is called an {\em almost perfect nonlinear} ({\em APN}) function. It is well known that PN functions do not exist if $p=2$.

 
For a $p$-ary $(n,m)$-function   $F:\F_{p^n}\to \F_{p^m}$, and $c\in\F_{p^m}$, the ({\em multiplicative}) {\em $c$-derivative} of $F$ with respect to~$a \in \F_{p^n}$ is the  function
\[
 _cD_{a}F(x) =  F(x + a)- cF(x), \mbox{ for  all }  x \in \F_{p^n}.
\]

For an $(n,n)$-function $F$, and $a,b\in\F_{p^n}$, we let the entries of the $c$-Difference Distribution Table ($c$-DDT) be defined by ${_c\Delta}_F(a,b)=\cardinality{\{x\in\F_{p^n} : F(x+a)-cF(x)=b\}}$. We call the quantity
\[
\delta_{F,c}=\max\left\{{_c\Delta}_F(a,b)\,|\, a,b\in \F_{p^n}, \text{ and } a\neq 0 \text{ if $c=1$} \right\}\]
the {\em $c$-differential uniformity} of~$F$. If $\delta_{F,c}=\delta$, then we say that $F$ is differentially $(c,\delta)$-uniform (or that $F$ has $c$-uniformity $\delta$, or for short, {\em $F$ is $\delta$-uniform $c$-DDT}). If $\delta=1$, then $F$ is called a {\em perfect $c$-nonlinear} ({\em PcN}) function (certainly, for $c=1$, they only exist for odd characteristic $p$; however, as proven in~\cite{EFRST20}, there exist PcN functions for $p=2$, for all  $c\neq1$). If $\delta=2$, then $F$ is called an {\em almost perfect $c$-nonlinear} ({\em APcN}) function. 
When we need to specify the constant $c$ for which the function is PcN or APcN, then we may use the notation $c$-PN, or $c$-APN.
It is easy to see that if $F$ is an $(n,n)$-function, that is, $F:\F_{p^n}\to\F_{p^n}$, then $F$ is PcN if and only if $_cD_a F$ is a permutation polynomial.

In~\cite{EFRST20,SPRS20,RS20,YZ20} various characterizations of the $c$-differential uniformity were found, and some of the known perfect and almost perfect nonlinear functions have been investigated.  In~\cite{S20}, the concept of boomerang uniformity was extended to $c$-boomerang uniformity and characterized  via Walsh transforms, and some of the known perfect nonlinear and the inverse function in all characteristics was dealt with via the  $c$-boomerang uniformity concept.   

The rest of the paper is organized as follows.  Section~\ref{sec2} gives  several background lemmas needed for the remaining of the paper. Section~\ref{sec3}  investigates  $c$-differential uniformity for an EA-perturbation via a linearized monomial of the inverse function.  Section~\ref{sec4} considers a linearized polynomial perturbation and finds some bounds in terms of characters on the finite field.   Section~\ref{sec5} provides a few computational results on some recognizable cipher Sboxes. Section~\ref{sec6}   concludes the paper.

\section{Some lemmas}
\label{sec2}

 We will be using  throughout Hilbert's Theorem 90 (see~\cite{Bo90}), which states that if $\mathbb{F}\hookrightarrow \mathbb{K}$  is a cyclic Galois extension and $\sigma$ is a generator of the Galois group ${\rm Gal}(\mathbb{K}/\mathbb{F})$, then for $x\in \mathbb{K}$, the relative trace $\displaystyle \Tr_{\mathbb{K}/\mathbb{F}}(x)=\sum_{i=0}^{|{\rm Gal}(\mathbb{K}/\mathbb{F})|-1}\sigma^i(x)=0$ if and only if $x=\sigma(y)-y$, for some $y\in\mathbb{K}$.
We also need the following two lemmas.  
\begin{lem} 
\label{lem10} 
Let $n$ be a positive integer. We have:
\begin{enumerate}
 \item[$(i)$] The equation
$ax^2 + bx + c = 0$, with $a,b, c\in\F_{2^n}$, $ab\neq 0$,
has two solutions in $\F_{2^n}$ if  $\Tr\left(
\frac{ac}{b^2}\right)=0$, and zero solutions, otherwise. If $a\neq 0,b=0$, the solution is unique \textup{(}see~\textup{\cite{BRS67}}\textup{)}.
\item[$(ii)$]  The equation
$ax^2 + bx + c = 0$, with $0\neq a,b\in\F_{p^n}$, $p$ odd,
has (two, respectively, one) solutions in $\F_{p^n}$ if and only if the discriminant $b^2-4ac$ is a (nonzero, respectively, zero) square in $\F_{p^n}$.
\item[$(iii)$] The equation
$x^3 + ax + b = 0$, with $a,b\in\F_{2^n}$, $b\neq 0$, has (denoting by $t_1,t_2$ the roots of $t^2+bt+a^3=0$):
\begin{itemize}
\item[$(i)$] three solutions in $\F_{2^n}$ if and only if $\Tr(a^3/b^2)=\Tr(1)$ and $t_1,t_2$ are cubes in $\F_{2^n}$ for $n$ even, and in $\F_{2^{2n}}$ for $n$ odd;
\item[$(ii)$] a unique solution in $\F_{2^n}$ if and only if $\Tr(a^3/b^2)\neq \Tr(1)$;
\item[$(iii)$] no solutions in $\F_{2^n}$ if and only if $\Tr(a^3/b^2)=\Tr(1)$ and $t_1,t_2$ are not cubes in $\F_{2^n}$ \textup{(}$n$ even\textup{)}, $\F_{2^{2n}}$  \textup{(}$n$ odd\textup{)}.
\end{itemize}
\end{enumerate}
\end{lem}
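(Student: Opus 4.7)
Parts (i) and (ii) are classical and I would derive them briefly; part (iii) follows from a resolvent reduction back to part (i).

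For (i), the degenerate case $b=0$, $a\neq 0$ gives $x^2=c/a$, which has the unique solution $x=(c/a)^{2^{n-1}}$ since the Frobenius $x\mapsto x^2$ is a bijection on $\F_{2^n}$. When $ab\neq 0$, the substitution $y=ax/b$ converts $ax^2+bx+c=0$ into the Artin--Schreier form $y^2+y=ac/b^2$. By Hilbert 90 applied to $\F_{2^n}/\F_2$ with $\sigma$ the Frobenius (so $\sigma(y)-y=y^2+y$ in characteristic $2$), this is solvable iff $\Tr(ac/b^2)=0$, in which case the two preimages differ by an element of $\ker(\sigma-\mathrm{id})=\F_2$, yielding exactly two roots $x$.

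For (ii), completing the square is permissible since $2$ is a unit in $\F_{p^n}$, giving $x=(-b\pm\sqrt{b^2-4ac})/(2a)$; the count is determined entirely by whether the discriminant is a nonzero square, zero, or a nonsquare in $\F_{p^n}$.

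For (iii), I would use the Tschirnhaus-style substitution $x=y+a/y$ (permissible since $b\neq 0$ forces $y\neq 0$ on any root). A characteristic-$2$ expansion gives $x^3+ax=y^3+a^3/y^3$, so the cubic becomes $y^6+by^3+a^3=0$, i.e., $(y^3)^2+b\,y^3+a^3=0$. Thus $y^3\in\{t_1,t_2\}$ where $t_1,t_2$ are the stated roots of $T^2+bT+a^3=0$. Counting roots $x\in\F_{2^n}$ then proceeds in two stages: first, locate $t_1,t_2$ via part (i), which places them in $\F_{2^n}$ iff $\Tr(a^3/b^2)=0$ and otherwise in $\F_{2^{2n}}\setminus \F_{2^n}$; second, extract cube roots and pair them via Vieta's relation $t_1t_2=a^3$, so that choosing a cube root $y_1$ of $t_1$ automatically makes $a/y_1$ a cube root of $t_2$. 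The three candidate roots are then $y_1+a/y_1$, $\omega y_1+\omega^2 a/y_1$, $\omega^2 y_1+\omega a/y_1$, where $\omega$ is a primitive cube root of unity.

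The main obstacle is translating the two separate constraints "$t_i$ in the ambient field" and "$t_i$ is a cube" into the single trace condition $\Tr(a^3/b^2)=\Tr(1)$ stated in the lemma. The point is that $\Tr(1)=n\bmod 2$, so for $n$ even one has $\Tr(1)=0$ and the condition reads $\Tr(a^3/b^2)=0$, putting $t_i\in\F_{2^n}$; since $3\mid 2^n-1$, a primitive cube root of unity lies in $\F_{2^n}$, and the three assembled values $y+a/y$ all lie in $\F_{2^n}$ iff $t_i$ is a cube there. For $n$ odd, $\Tr(1)=1$ forces $t_i\in\F_{2^{2n}}\setminus\F_{2^n}$, and the cube-root extraction is performed in $\F_{2^{2n}}$, with a Frobenius-invariance check ensuring the three values $y+a/y$ descend into $\F_{2^n}$ exactly when $t_1,t_2$ are cubes in $\F_{2^{2n}}$. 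Once these bookkeeping matches are made, the trichotomy three/one/zero roots falls out of the combined analysis of the resolvent quadratic and the cube-root extraction.
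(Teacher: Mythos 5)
The paper does not actually prove this lemma: it is stated as background, with part (i) attributed to Berlekamp--Rumsey--Solomon \cite{BRS67}, and the cubic criterion in part (iii) is precisely the classical result from that same line of work. Your sketch reconstructs that classical argument, and the route you take is the standard one: Artin--Schreier reduction plus additive Hilbert 90 for (i), completing the square for (ii), and for (iii) the substitution $x=y+a/y$, which in characteristic $2$ correctly yields $x^3+ax=y^3+a^3/y^3$ and hence the resolvent $(y^3)^2+b\,y^3+a^3=0$, with Vieta's relation $t_1t_2=a^3$ pairing the cube roots. The identification of $\Tr(1)$ with $n\bmod 2$ and the resulting split between $t_i\in\F_{2^n}$ ($n$ even) and $t_i\in\F_{2^{2n}}\setminus\F_{2^n}$ ($n$ odd) is also right. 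Two places where your plan is thinner than a complete proof: the ``unique root'' case (showing that when $\Tr(a^3/b^2)\neq\Tr(1)$ exactly one of the three candidates $\omega^j y_1+\omega^{-j}a/y_1$ is Frobenius-fixed, which uses separability of the cubic --- guaranteed by $b\neq 0$ --- so that $0$, $1$, or $3$ roots are the only possibilities) and the degenerate case $a=0$, where $x=y+a/y$ collapses and one should argue directly from $x^3=b$ (the statement remains consistent there since the resolvent roots are $0$ and $b$). These are routine to fill in, so the proposal is essentially a correct rendering of the proof the paper delegates to its reference.
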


\begin{lem}[\textup{\cite{EFRST20}}]
\label{lem:gcd}
Let $p,t,n$ be integers greater than or equal to $1$ (we take $t\leq n$, though the result can be shown in general). Then
\begin{align*}
&  \gcd(2^{t}+1,2^n-1)=\frac{2^{\gcd(2t,n)}-1}{2^{\gcd(t,n)}-1},  \text{ and if  $p>2$, then}, \\
& \gcd(p^{t}+1,p^n-1)=2,   \text{ if $\frac{n}{\gcd(n,t)}$  is odd},\\
& \gcd(p^{t}+1,p^n-1)=p^{\gcd(k,n)}+1,\text{ if $\frac{n}{\gcd(n,t)}$ is even}.\end{align*}
Consequently, if either $n$ is odd, or $n\equiv 2\pmod 4$ and $t$ is even,   then $\gcd(2^t+1,2^n-1)=1$ and $\gcd(p^t+1,p^n-1)=2$, if $p>2$.
\end{lem}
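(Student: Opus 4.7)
The plan is to reduce everything to the classical identity $\gcd(p^a-1,p^b-1)=p^{\gcd(a,b)}-1$. The first step is to notice that $p^t+1$ divides $p^{2t}-1$, so $g:=\gcd(p^t+1,p^n-1)$ divides $\gcd(p^{2t}-1,p^n-1)=p^{\gcd(2t,n)}-1$. Setting $d=\gcd(t,n)$ and writing $t=de$, $n=dm$ with $\gcd(e,m)=1$, one computes $\gcd(2t,n)=d\cdot\gcd(2e,m)=d\cdot\gcd(2,m)$, which equals $d$ if $m$ is odd and $2d$ if $m$ is even. This parity of $m$ provides the natural case split that matches the two branches of the statement.

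In the first case ($m$ odd) I would argue $g\mid p^d-1$, and since $d\mid t$ also $p^d-1\mid p^t-1$; hence $g\mid\gcd(p^t-1,p^t+1)\mid 2$. When $p=2$ the number $p^n-1$ is odd, forcing $g=1$ and matching $(2^d-1)/(2^d-1)=1$; when $p$ is odd both $p^t+1$ and $p^n-1$ are even, so $g=2$ as claimed. In the second case ($m$ even) the integer $e=t/d$ must be odd (since $\gcd(e,m)=1$), so $y+1\mid y^e+1$ evaluated at $y=p^d$ gives $p^d+1\mid p^t+1$; also $p^{2d}-1\mid p^{dm}-1=p^n-1$ and $p^d+1\mid p^{2d}-1$, whence $p^d+1\mid g$. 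For the matching upper bound I would write $p^t+1=(p^d+1)\,Q(p^d)$ with $Q(y)=(y^e+1)/(y+1)$, observe $Q(1)=1$, hence $Q(p^d)\equiv 1\pmod{p^d-1}$; combined with $p^{2d}-1=(p^d-1)(p^d+1)$ this yields $\gcd(p^{2d}-1,p^t+1)=p^d+1$, and therefore $g=p^d+1$.

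The main obstacle is the upper bound in the even-$m$ case: the divisibility $g\mid p^{2d}-1$ alone leaves a spurious $(p^d-1)$-factor that must be cleared using the cyclotomic identity $Q(1)=1$. Once the two cases are in place, the \emph{consequently} clause drops out by verifying that $m=n/d$ is odd in each hypothesis: if $n$ is odd this is immediate, while if $n\equiv 2\pmod 4$ with $t$ even, writing $n=2s$ with $s$ odd forces the even divisor $d=\gcd(t,n)$ of $n$ to be $2$ or $2s$, so $m\in\{s,1\}$ is odd, placing us in the first case.
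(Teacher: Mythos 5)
The paper does not prove this lemma; it is imported verbatim from~\cite{EFRST20} with only a citation, so there is no in-paper argument to compare against. Judged on its own, your proof is correct and is essentially the standard one: reducing via $p^t+1\mid p^{2t}-1$ to $\gcd(p^{2t}-1,p^n-1)=p^{\gcd(2t,n)}-1$, splitting on the parity of $m=n/\gcd(t,n)$, and in the even case clearing the spurious $p^d-1$ factor by noting $Q(p^d)\equiv Q(1)=1\pmod{p^d-1}$ (together with $\gcd(ab,ac)=a\gcd(b,c)$, which you use implicitly and which is fine). Your unified treatment also correctly recovers both branches of the $p=2$ formula $\bigl(2^{\gcd(2t,n)}-1\bigr)/\bigl(2^{\gcd(t,n)}-1\bigr)$, which equals $1$ or $2^d+1$ according to the parity of $m$, and you rightly read the statement's $\gcd(k,n)$ as the (evidently intended) $\gcd(t,n)$.

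One small inaccuracy in the \emph{consequently} clause: when $n\equiv 2\pmod 4$ and $t$ is even, you assert that the even divisor $d=\gcd(t,n)$ of $n=2s$ ($s$ odd) must be $2$ or $2s$. That is false in general: the even divisors of $2s$ are exactly $2s'$ for $s'\mid s$, and there are more than two of them whenever $s$ is composite (e.g.\ $n=30$, $t=6$ gives $d=6$). The conclusion you need survives, though, since for any such $d=2s'$ one gets $m=n/d=s/s'$, a quotient of odd numbers, hence odd; so you land in the odd-$m$ case as required. Replace the enumeration of divisors by this one-line parity observation and the argument is complete.
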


\section{The $c$-differential uniformity of some $EA$-perturbed inverse function}
 \label{sec3}
 
 We showed in~\cite{EFRST20} that the inverse function is PcN for $c=0$,  and it is 2 or~3 depending upon the parameter  {$c \neq 1$} (we found precisely those conditions).  
 
 In our main result of this paper we see that performing a simple modification of the inverse function increases significantly the maximum value in its $c$-differential spectrum size. In the following, we take $p$ prime, $n\geq 4$ an integer, and  $1\leq t<n$ an integer such that   $\gcd(t,n)=d\geq 1$, $n\geq 3d$ and  $a^{p^t+1}+1=0$ has a root (and consequently, $\gcd(p^t+1,p^n-1)$ roots) in the  field $\F_{p^n}$; this last condition will always happen if $n/d$ is even (that can be seen since $\gcd(p^t+1,p^n-1)$ divides $\frac{p^n-1}{2}$ under the conditions). If $p=2$, this last condition is superfluous.
 
 \begin{thm}
 \label{main:thm}
 Let $p$ be a prime number, $n\geq 4$, $F(x)=x^{p^n-2}$ be the inverse function on $\F_{p^n}$, and $1\neq c\in\F_{p^n}$. Then, the $c$-differential uniformity, $\delta_{G,c}$, of $G(x)=F(x)+x^{p^t}$ satisfies $p^{\gcd(n,t)}+2\leq \delta_{G,c}\leq p^t+4$, if $p=2$, or if $p>2$ and $\frac{n}{\gcd(n,t)}$ is even\textup{;} and $4\leq \delta_{G,c}\leq p^t+4$, if $p>2$ and $\frac{n}{\gcd(n,t)}$ is odd. 
 \end{thm}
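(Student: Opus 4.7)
My approach is to reduce the equation $G(x+a)-cG(x)=b$ to a polynomial equation of bounded degree by exploiting the two-component structure of $G(x)=x^{p^n-2}+x^{p^t}$. For $x\notin\{0,-a\}$ one has $x^{p^n-2}=1/x$; combining this with the additivity of the linearized summand $x\mapsto x^{p^t}$ yields the single rational identity
\[
(1-c)x^{p^t}+a^{p^t}+\frac{(1-c)x-ca}{x(x+a)}=b.
\]
The two exceptional candidates $x=0$ and $x=-a$ are treated by direct substitution into the original equation, each contributing at most one solution.

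\textbf{Upper bound.} For $a\neq 0$ I clear denominators by multiplying through by $x(x+a)$, producing a polynomial of degree $p^t+2$ whose leading coefficient is $1-c\neq 0$ and whose constant term is $-ca\neq 0$; hence at most $p^t+2$ roots in $\F_{p^n}$, and a direct check at $x=-a$ confirms that neither exceptional candidate is already among them. Combined with their contributions this gives $\le p^t+4$. For $a=0$ (permitted since $c\neq 1$) the equation reduces to $G(x)=b/(1-c)$, and a similar clearing of the inverse gives a polynomial of degree $p^t+1$, hence at most $p^t+2$ solutions. Thus $\delta_{G,c}\le p^t+4$ in all cases.

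\textbf{Lower bound.} I test $(a,b)=(0,0)$, which is allowed because $c\neq 1$. The equation becomes $G(x)=0$, i.e., $x=0$ or $x^{p^t+1}=-1$. By the standing hypothesis the latter has a solution, so its full solution set is a coset of the $(p^t+1)$-th roots of unity in $\F_{p^n}^*$, of size $\gcd(p^t+1,p^n-1)$. By Lemma~\ref{lem:gcd} this gcd equals $p^d+1$ whenever $n/d$ is even (which, in characteristic~$2$, is expressed through the ratio $(2^{\gcd(2t,n)}-1)/(2^d-1)$), giving $p^d+2$ solutions in total. For $p>2$ with $n/d$ odd the gcd equals $2$, so $(0,0)$ yields $3$ solutions, and to obtain the weaker claim $\delta_{G,c}\ge 4$ I supplement with an auxiliary choice of $a\neq 0$ and $b$ making one of $x=0,-a$ an additional solution.

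\textbf{Main obstacle.} The principal subtlety is the upper-bound bookkeeping: one must confirm that both leading and constant coefficients of the cleared polynomial are nonzero so the degree-$(p^t+2)$ count is genuinely maximal, and verify by substitution that $x=-a$ is not a root, so the exceptional points add rather than double-count. A secondary subtlety is the $p=2$, $n/d$ odd sub-case of the lower bound, where $\F_{2^{2d}}\not\subseteq\F_{2^n}$ forces $(0,0)$ to yield only $2$ solutions; reaching $p^d+2$ in that range requires an alternative construction, such as choosing $a\in\F_{2^d}^*$ (on which $x\mapsto x^{p^t}$ acts trivially) and tuning $b$ so that the cleared polynomial factors with the right number of subfield roots.
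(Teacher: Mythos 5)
Your reduction to a degree-$(p^t+2)$ polynomial after clearing denominators, together with the two exceptional points $x=0,-a$, reproduces the paper's upper-bound argument and is correct. The problem is the lower bound. Writing $d=\gcd(n,t)$, the pair $(a,b)=(0,0)$ gives $1+\gcd(p^t+1,p^n-1)$ solutions, and this equals $p^{d}+2$ only when $n/d$ is even; for $p=2$ with $n/d$ odd the gcd is $1$, so $(0,0)$ yields just $2$ solutions, while the theorem still asserts the bound $2^{d}+2$ in that case. You flag this yourself, but the ``alternative construction'' you sketch (taking $a\in\F_{2^d}^*$ and tuning $b$) is not carried out and is not what makes the case work. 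The paper's mechanism is to keep $a=0$ but take $b\neq 0$: the nonzero solutions then satisfy $x^{p^t+1}+\frac{b}{c-1}x+1=0$, which after normalization becomes $x^{p^t+1}-Bx+B=0$ with $B=\bigl(\frac{b}{c-1}\bigr)^{p^t+1}$, and Bluher's theorem on $x^{q+1}+ax+b$ (equivalently, the Helleseth--Kholosha result on $x^{2^t+1}+x+A=0$ for $p=2$) guarantees that $\frac{Q^{m-1}-1}{Q^2-1}>0$ values of the parameter (here $Q=p^{d}$, $m=n/d$ odd, and $n\geq 3d$ makes this count positive) yield exactly $Q+1$ roots; since $\gcd(2^t+1,2^n-1)=1$ in this subcase, every such parameter value is realizable by a choice of $b$. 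This appeal to the known root distribution of projective polynomials is the missing idea, and it is the genuinely hard part of the lower bound.

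A second, smaller defect: for $p>2$ with $n/d$ odd you propose to reach $4$ by ``supplementing'' the three solutions at $(0,0)$ with a solution attached to an auxiliary choice of $a\neq 0$ and $b$. That is not a valid move: $\delta_{G,c}$ is the maximum over single pairs $(a,b)$ of the number of solutions of ${}_cD_aG(x)=b$, so solutions belonging to different pairs cannot be pooled. To prove $\delta_{G,c}\geq 4$ you must exhibit one pair $(a,b)$ and one $c$ admitting four solutions simultaneously.
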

 \begin{proof}
 The $c$-differential uniformity equation for $G$ for $c\in\F_{p^n}$ at $(a,b)\in\F_{p^n}\times\F_{p^n}$ is
 \begin{equation}
 \label{eq:cdiff}
 (x+a)^{p^n-2}+(x+a)^{p^t}-c x^{p^n-2}-cx^{p^t}=b.
 \end{equation}

 We first assume that $a\neq 0$. We consider several cases.
 
 \noindent
 {\em Case $(i)$.} Let $x=0$. Equation~\eqref{eq:cdiff} becomes
 \[
 \frac{1}{a}+ a^{p^t}=b.
 \]
 Thus, for any $a\neq 0$ and $b= \frac{1}{a}+ a^{p^t}$, we have a solution of~\eqref{eq:cdiff}, regardless of the value of~$c$.

  \noindent
 {\em Case $(ii)$.} Let $x=-a$. Equation~\eqref{eq:cdiff} becomes
 \[
 c\left( \frac{1}{a}+ a^{p^t}\right)=b,
 \]
 and we have yet another solution to~\eqref{eq:cdiff}, for $c$ given by the above displayed equation. Surely, if $a$ is such that $a^{p^t+1}+1=0$; there are $\gcd(p^t+1,p^n-1)$ such $a$'s (which, by Lemma~\ref{lem10},  is $\gcd(2^t+1,2^n-1)=\frac{2^{\gcd(2t,n)}-1}{2^{\gcd(t,n)}-1}$ if  $p=2$, and if $p>2$, the number of such $a$'s is  $2$ when $\frac{n}{\gcd(n,t)}$ is odd, and $p^{\gcd(n,t)}+1$, when $\frac{n}{\gcd(n,t)}$ is even, all if $t>0$; when $t=0$, the value of  $\gcd(p^t+1,p^n-1)$ is $1$, respectively, $2$, for $p=2$, respectively, $p>2$), then $b$ must be zero and again $c$ can be taken arbitrary.
 
 We make an observation here: the two solutions from Cases $(i)$ and $(ii)$ cannot be combined unless $b=0,a^{p^t+1}+1=0$ (and arbitrary $c$), or  $b=\frac{1}{a}+ a^{p^t}$ and $c=1$.  
 
  \noindent
 {\em Case $(iii)$.}  Let $x\neq 0,-a$.   Equation~\eqref{eq:cdiff} becomes
 \allowdisplaybreaks
\begin{align}
& \frac{1}{x+a}+(1-c) x^{p^t}-\frac{c}{x}=b-a^{p^t},\text{ that is,}\nonumber\\
&x+(1-c)x^{p^t+1} (x+a)-c(x+a)=(b-a^{p^t})x(x+a),\text{ or},\nonumber\\
&  x^{p^t+2} +a  x^{p^t+1} +\frac{b-a^{p^t}}{c-1} x^2+ \frac{ab+c-a^{p^t+1}-1}{c-1} x+\frac{ac}{c-1}=0.\label{eq:gen_t}
\end{align}
We therefore infer that the maximum number of solutions for the $c$-differential uniformity is $\delta_{G,c}\leq p^t+4$. To get a lower bound, we take  $a=0$, obtaining
\[
x^{p^t+2} +\frac{b}{c-1} x^2+x=0,
\]
with solution $x=0$ and cofactor
\begin{equation}
\label{blu:eq0}
x^{p^t+1} +\frac{b}{c-1} x+1=0.
\end{equation}
Multiplying~\eqref{blu:eq0} by $\left(\frac{b}{c-1}\right)^{p^t+1}$ and relabeling $x\mapsto \frac{1-c}{b} x$ (if $b\neq 0$, otherwise, we look at $x^{p^t+1} +1=0$, which  has $\gcd(p^t+1,p^n-1)\leq p^{\gcd(n,t)}+1$ solutions), we obtain
\begin{equation}
\label{blu:eq}
x^{p^t+1}-Bx+B=0,
\end{equation}
where $B=\left(\frac{b}{c-1}\right)^{p^t+1}$ and we can apply~\cite[Theorem 5.6]{Bluher04}. Using the notations from~\cite{Bluher04}, we let $\F_Q=\F_{p^n}\cap \F_{p^t}=\F_{p^{\gcd(n,t)}}$ (so, $Q=p^{\gcd(n,t)}$), $m=[\F_{p^n}:\F_{Q}]=\frac{n}{\gcd(n,t)}$. From \cite[Theorem 5.6]{Bluher04}, we know that there are $\displaystyle \frac{Q^{m-1}-Q}{Q^2-1}$, $\displaystyle \frac{Q^{m-1}-1}{Q^2-1}$, for $m$ even, respectively odd, values of $B$ such that Equation~\eqref{blu:eq} has $Q+1$ solutions.
Let $T$ be the set of all such $B$. Thus, $|T|=\displaystyle \frac{Q^{m-1}-Q}{Q^2-1}$, for $m$ even, and $|T|=\displaystyle \frac{Q^{m-1}-1}{Q^2-1}$, for $m$ odd.

 To get our claimed lower 
bound, we just need to argue that we can always find $b,c$ such that $B\in T$. 
If $m=\frac{n}{\gcd(n,t)}$ is odd, then $\gcd(p^t+1,p^n-1)=1,2$, for $p=2$, respectively, $p>2$, and we can take any $B\in T$, if $p=2$, and $B=\tilde B^2\in T$ (such a $\tilde B$ does exists, for example, $\tilde B=0$) and a random $c\neq 1$, and $b= (c-1) \tilde B^{\frac{2}{p^t+1}}$, for $p>2$. The number of solutions of~\eqref{blu:eq0} for these parameters is therefore $Q+1$.
If $m=\frac{n}{\gcd(n,t)}$ is even, then $\gcd(p^t+1,p^n-1)=Q+1$. We again use~\cite{Bluher04},  by taking $B=\tilde B^{Q+1}\in T$ (such a $\tilde B$ does exists, for example, $\tilde B=0$) and a random $c\neq 1$, and $b= (c-1) \tilde B^{\frac{Q+1}{p^t+1}}$.  
%

 If $a=0$, the $c$-differential equation becomes $x^{p^n-2}+x^{p^t}=b$. If $b=0$, this equation has $x=0$ as a root and moreover, $x^{p^n-p^t-2}+1=0$, which is equivalent to $x^{p^t+1}+1=0$. We therefore have ${\gcd(p^t+1,p^n-1)}$ solutions.
 If $b\neq 0$, the mentioned equation is equivalent to (multiplying by $x\neq 0$)
 \[
 x^{p^t+1}-b x+1=0.
 \]
 We argued above in Case $(iii)$ that this equation has $Q+1$ solutions. 
  
 In the proof above, for $p=2$, one could use~\textup{\cite{HK08}}, where it was shown that an equation of the form $x^{2^t+1}+x+A=0$ has $Q+1$  zeros for $\frac{Q^{m-1}-1}{Q^2-1}$,  $\frac{Q^{m-1}-Q}{Q^2-1}$, for $m$ odd, respectively, even, values of the parameter $A$. Surely, multiplying~\eqref{blu:eq0} by $\left(\frac{b}{c+1}\right)^{\frac1{2^t}}$ \textup{(}which always exists\textup{)} and performing the substitution $x\mapsto x\, \left(\frac{b}{c+1}\right)^{\frac1{2^t}}$ gives us the equation
$\displaystyle 
x^{2^t+1}+x+\left(\frac{c+1}{b}\right)^{1+\frac{1}{2^t}}=0,
$
and we can apply the same technique as in the proof above, though, the existence of values $b,c$ such that $A=\left(\frac{c+1}{b}\right)^{\frac{2^t+1}{2^t}}$ is not in question anymore for any $A\neq 0$. The theorem is shown.
\end{proof}

\begin{rem}
We could have taken $a=b=0$ from the beginning to get the lower bound, but we wanted to emphasize that there are many other entries in the $c$-DDT table of $G$ lower bounded by $p^{\gcd(n,t)}+2$ (under the mentioned conditions on $n,t$).
\end{rem}

  The following corollary is immediate. It implies that if $n=8$, for the inverse function $F(x)=x^{254}$ (which is one of the components of the Sbox used in AES (Advanced Encryption  {Standard})~\cite{AES}, in addition to an affine transformation), the $c$-differential uniformity of $G(x)=x^{254}+x^{2^4}$ has $\delta_{G,c}\geq 18$, for some $c$ (we confirmed computationally that it is exactly~$18$).
  \begin{cor}
  Let $n\geq 4$, $F(x)=x^{p^n-2}$ be the inverse function on $\F_{p^n}$, and $t\,|\,n$ be the largest divisor of $n$ such that $\frac{n}{\gcd(n,t)}$ is even, and  $G(x)=x^{p^n-2}+x^{p^t}$. Then,   there exists $c$ such that $\delta_{G,c}\geq p^t+2$.
  \end{cor}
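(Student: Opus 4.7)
The plan is to invoke the lower bound already proven in Theorem~\ref{main:thm} rather than reprove anything from scratch. First I would note that $t\mid n$ forces $\gcd(n,t)=t$, so the quantity $p^{\gcd(n,t)}+2$ appearing in Theorem~\ref{main:thm} coincides with the target $p^t+2$. Second, the hypothesis that $n/\gcd(n,t)=n/t$ is even is precisely what selects the stronger branch of Theorem~\ref{main:thm} (for $p=2$ no parity condition is needed, while for odd $p$ it is exactly what the theorem asks of us). Taking these two observations together, the corollary falls out of the theorem verbatim.

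For transparency I would also point to the specific $c$-DDT cell that witnesses the bound, re-using the $a=0$ analysis inside the proof of Theorem~\ref{main:thm}. Setting $a=b=0$ in Equation~\eqref{eq:cdiff} collapses it to $(1-c)G(x)=0$, which for any $c\neq 1$ is equivalent to $x^{p^n-2}+x^{p^t}=0$. The root $x=0$ is free, and after dividing by $x\neq 0$ the remaining roots are those of $x^{p^t+1}+1=0$. By Lemma~\ref{lem:gcd}, the evenness of $n/t$ gives $\gcd(p^t+1,p^n-1)=p^{\gcd(n,t)}+1=p^t+1$, and these roots actually exist in $\F_{p^n}$ by the existence remark preceding Theorem~\ref{main:thm}. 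Combining, ${_c\Delta}_G(0,0)\geq p^t+2$ for every $c\neq 1$, which forces $\delta_{G,c}\geq p^t+2$.

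The only subtlety worth flagging is that, for odd $p$, one needs to know $-1$ is actually a $(p^t+1)$-th power in $\F_{p^n}$ so that $x^{p^t+1}+1=0$ carries the expected number of roots; this is exactly guaranteed by the discussion preceding Theorem~\ref{main:thm} whenever $n/\gcd(n,t)$ is even. Beyond this sanity check the argument is purely bookkeeping, which is why the corollary is advertised as immediate.
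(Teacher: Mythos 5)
Your proposal is correct and matches the paper's route: the paper states the corollary as an immediate consequence of Theorem~\ref{main:thm} (since $t\mid n$ gives $\gcd(n,t)=t$ and the parity hypothesis selects the branch with lower bound $p^{\gcd(n,t)}+2$), and your explicit witness via $a=b=0$, reducing to $x^{p^t+1}+1=0$ with $\gcd(p^t+1,p^n-1)=p^t+1$ roots, is precisely the computation already carried out inside the paper's proof of that theorem. Your observation that this cell works for \emph{every} $c\neq 1$ is a harmless strengthening of the ``there exists $c$'' claim.
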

\begin{remark}
We will see below that, if $p=2$, in fact, any cube $c\neq 1$ satisfies the conditions of the previous corollary.
\end{remark}

Next, we find some values of $t$ for which the upper bound $ p^t+4$, or the lower bound $p^{\gcd(t,n)}+2$ are attained by $\delta_{G,c}$ for some $c$. We will show that, in fact, this will happen for $p=2$, $t=0$ and $n$ even, respectively, $n$ odd. 
 \begin{thm}
 \label{2nd:thm}
 Let $n\geq 4$, $F(x)=x^{2^n-2}$ be the inverse function on $\F_{2^n}$, and $1\neq c\in\F_{2^n}$. Then, if $n$ is even, the $c$-differential uniformity of $G(x)=F(x)+x$ is $\delta_{G,c}=5$,  for some $c$; if  $n$ is odd, there exists $c$ such that  $\delta_{G,c}=4$. Moreover, if $G(x)=F(x)+x^2$ and $n$ is even, then there exists $c$ such that $\delta_{G,c}=5$; if   $n$ is odd and there exists $a$ such that $\Tr\left(\frac{a^2}{a^2+a+1}\right)=\Tr\left( \frac{a^4}{(a+1)^5}\right)=0$, then $\delta_{G,c}=5$ for some $c$ \textup{(}for example, $c=1+ \frac{1}{(a^3+a^2+1)^{\frac12}}$\textup{)}.
 \end{thm}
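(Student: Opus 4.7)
I would mirror the three-case decomposition from the proof of Theorem~\ref{main:thm}, applied with $t=0$ for the first two claims and with $t=1$ for the last two. Writing $G(x) = x^{-1} + x^{2^t}$ and the $c$-differential equation $G(x+a)+cG(x)=b$ (in characteristic~$2$), Case~(i) $x=0$ contributes a solution iff $b = a^{-1}+a^{2^t}$; Case~(ii) $x=a$ iff $b = c(a^{-1}+a^{2^t})$; and Case~(iii) $x\notin\{0,a\}$, after clearing by $x(x+a)$, yields a polynomial of degree $2^t+2$. So $\delta_{G,c}\le 2^t+4$, which already matches the target $5$ when $t=0$ and will need sharpening from $6$ to $5$ when $t=1$.

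For $G=F+x$ with $n$ even, I would set $a=1,\ b=0$, so $a^{-1}+a=0=b$ and Cases~(i),~(ii) both fire for every $c\neq 1$, contributing $x=0,1$. A direct substitution turns the Case~(iii) cubic into $x^3+\lambda(x^2+x+1)=0$ with $\lambda=c/(1+c)$; the characteristic-$2$ depression $x\mapsto x+\lambda$ produces $y^3+\mu y+\mu=0$ with $\mu=\lambda^2+\lambda$. Lemma~\ref{lem10}(iii) then requires $\Tr(\mu)=\Tr(1)$ plus cubeness of the roots $t_{1,2}$ of $t^2+\mu t+\mu^3=0$. The trace condition is automatic for even $n$, since $\Tr(\lambda^2+\lambda)=0=\Tr(1)$; and because $t_1 t_2=\mu^3$ is a cube, the cube condition collapses to requiring $t_1$ alone to be a cube, which I would secure by showing that as $c$ (hence $\mu$) varies, it meets the index-$3$ subgroup of cubes of $\F_{2^n}^*$. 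This delivers $2+3=5$ solutions. For $n$ odd the same configuration fails, because now $\Tr(\mu)=0\neq 1=\Tr(1)$; I would instead take $a\neq 1$ and $b=a^{-1}+a$, triggering only Case~(i), and tune $a,c$ so that the general Case~(iii) depressed cubic satisfies Lemma~\ref{lem10}(iii)'s trace condition with $\Tr=1$, yielding $1+3=4$.

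For $G=F+x^2$, Case~(iii) is the quartic
\[
(1+c)x^4+(1+c)ax^3+(b+a^2)x^2+[(b+a^2)a+(1+c)]x+ca=0.
\]
For $n$ even I would reuse $a=1,\ b=0$ (activating both boundary cases) and argue that this specialized quartic has exactly three roots in $\F_{2^n}\setminus\{0,1\}$, producing $\delta_{G,c}=5$. For $n$ odd the explicit choice $c=1+(a^3+a^2+1)^{-1/2}$ is engineered so that the quartic factors as a product $q_1(x)q_2(x)$ of two quadratics over $\F_{2^n}$; Lemma~\ref{lem10}(i) then converts the stated trace hypotheses $\Tr(a^2/(a^2+a+1))=0$ and $\Tr(a^4/(a+1)^5)=0$ into solvability of $q_1$ and $q_2$, and the resulting three Case~(iii) roots together with the triggered boundary case(s) total exactly~$5$. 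The main obstacle is exhibiting the quartic factorization explicitly and, crucially, ruling out a fourth Case~(iii) root so that the uniformity lands at exactly~$5$ rather than drifting to~$6$; once the factorization is in hand, the remaining bookkeeping is handled term-by-term by Lemma~\ref{lem10}.
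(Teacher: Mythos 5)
Your plan follows the paper's proof almost step for step: the same three-case decomposition inherited from Theorem~\ref{main:thm}, the choice $a=1$, $b=0$ to fire both boundary solutions, the characteristic-$2$ depression of the cubic plus Lemma~\ref{lem10}(iii) for $t=0$ (the paper makes your ``meets the index-$3$ subgroup'' step explicit by computing $t_1=\frac{c}{(c+1)^3}$, $t_2=\frac{c^2}{(c+1)^3}$ and taking $c$ a cube), and, for $t=1$ with $n$ odd, the same factorization of the quartic into two quadratics at $c=1+(a^3+a^2+1)^{-1/2}$ with Lemma~\ref{lem10}(i) supplying exactly the two stated trace conditions. However, two points in your execution are genuinely off. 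First, your bookkeeping in the $t=1$, $n$ odd case is wrong: the factorization $q_1q_2$ with each quadratic split yields \emph{four} Case~(iii) roots, and since $a\neq 1$ and $c\neq 1$ only Case~(i) fires, the total is $4+1=5$. Your ``three Case~(iii) roots together with the triggered boundary case(s)'' cannot reach $5$ (only one boundary case is available), and your stated aim of ``ruling out a fourth Case~(iii) root'' would destroy the lower bound rather than secure the upper one. The reason the count cannot drift to $6$ is different: both boundary cases fire only when $b=0$ and $a^{3}=1$, and in that configuration the quartic degenerates (for $a=1$, to $(x^2+x+1)\bigl((c+1)x^2+c\bigr)$ up to a constant), contributing at most one root for $n$ odd and at most three for $n$ even.

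Second, in the $t=0$, $n$ odd case you propose to obtain three cubic roots by tuning $a,c$ so that the trace condition of Lemma~\ref{lem10}(iii) holds with value $\Tr(1)=1$. That condition only excludes the unique-root case: when it holds you get either three roots or none, according to whether $t_1,t_2$ are cubes (in $\F_{2^{2n}}$ for $n$ odd), so the cube verification cannot be skipped here. (The paper itself is admittedly terse on this subcase, disposing of it in one sentence.) With these two repairs your argument coincides with the paper's.
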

\begin{proof}
Below, we will not go through the corresponding Cases $(i)$ and $(ii)$ as in Theorem~\ref{main:thm} since these arguments are independent of $t$, but we will refer to them.

Let $G(x)=F(x)+x$. In that case, we must investigate the equation
\begin{equation}
\label{eq:t=0}
x^3+\left(a+\frac{b+a^{2^t}}{1+c} \right) x^2+ \frac{1+c+ab+a^{2^t+1}}{1+c} x+\frac{ac}{1+c}=0.
\end{equation}
To achieve the maximum $5$ number of solutions $x$, then $b=0$, and $a^{2^t+1}$ for $t=0$ must be equal to $1$ (thus,  $a=1$), rendering
\begin{equation}
\label{eq:t=0_1}
x^3+ \frac{c}{1+c}  x^2+ \frac{c}{1+c} x+\frac{c}{1+c}=0.
\end{equation}
Replacing $y=x+\frac{c}{c+1}$, we get
\begin{equation}
\label{eq:t=0_2}
y^3+\frac{c}{(c+1)^2} y+\frac{c}{(c+1)^2} =0.
\end{equation}
By Lemma~\ref{lem10}, this last equation has three solutions if and only if $c\neq 0$ and $\Tr\left(\frac{c}{\left(c+1 \right)^2} \right)=\Tr(1)$ and the roots $t_1,t_2$ of $t^2+\frac{c}{(c+1)^2} t+\left(\frac{c}{(c+1)^2}\right)^3 =0$ are cubes in $\F_{2^n}$, $\F_{2^{2n}}$, for $n$ even, respectively, $n$ odd. We quickly see that $\Tr\left(\frac{c}{\left(c+1 \right)^2} \right)=\Tr\left(\frac{c+1+1}{\left(c+1 \right)^2} \right)=\Tr\left(\frac{1}{c+1}+\frac{1}{\left(c+1 \right)^2} \right)=0=\Tr(1)$, via Hilbert's Theorem 90. Therefore, this can only be potentially achieved if $n$ is even. 

We would need to argue that for $n$ even  we can always find some $c$, such that the solutions to $t^2+\frac{c}{(c+1)^2} t+\left(\frac{c}{(c+1)^2}\right)^3=0$ are cubes in $\F_{2^n}$.
The roots of this equation can be quickly found to be 
\[
t_1=\frac{c}{(c+1)^3},\ t_2=\frac{c^2}{(c+1)^3}.
\]
We immediately see that if we take $c$ to be a cube, then both of these roots are cubes, and consequently we have three roots for~\eqref{eq:t=0_1}. We need to argue that they are not repeated roots. Since we are working over binary, it is sufficient to check that   the coefficient of $x^2$, namely $\frac{c}{c+1}$ is not a root, which is easy since the left hand side of~\eqref{eq:t=0_1} at  $\frac{c}{c+1}$ is exactly $\frac{c}{c+1}\neq 0$, because $c\neq 0$.  
 
 For $n$ odd, we cannot combine Cases $(i)$ and $(ii)$, but the same argument reveals four solutions for~\eqref{eq:t=0_1} and our first claim is shown.

 Let now $t=1$. If $b=0$, and $a^{2^t+1} =a^3=1$ (for $n$ even, we have two options, either $a=1$, or $a^2+a+1=0$; for $n$ odd, we can only have $a=1$). Equation~\eqref{eq:gen_t} becomes
 \[
 x^{4} +a  x^{3} +\frac{a^2}{1+c} x^2+ \frac{c}{1+c} x+\frac{ac}{1+c}=0.
 \]
 
 If $a=1$, the above equation becomes
 \[
 x^4+x^3+\frac{1}{c+1} x^2+ \frac{c}{1+c} x+\frac{c}{1+c}=0,
 \]
 which can be written as
 \[
 (x^2+x+1)((c+1)x^2+c)=0,
 \]
 therefore, we easily get $3$ roots for the above equation, which combined with the ones from Cases $(i)$ and $(ii)$, renders $5$ altogether, for $n$ even. There are many values of $c$ we can take: for example, for any $x\neq 0,1,a$ not a root of $x^2+x+1$, then we take $c=\frac{x^2}{x^2+1}$.
 Surely, if not both Cases $(i)$ and $(ii)$ hold simultaneously, then we still cannot get more than 5 solutions (we may still get 5 solutions, though).

 If $n$ is odd, then $a=1$ cannot give us more than~$3$~roots (since $x^2+x+1\neq 0$, under $n$ odd), so we assume that $a\neq 1$. Again, under $n$ odd, if $b=0$ and $c=0$, and Equation~\eqref{eq:gen_t} becomes
 \[
 x^4+ax^3+a^2 x^2+(1+a^3) x=0,
 \]
 with solutions $x=0, a+1$, and $(x+a)^2+(x+a)+1=0$, but the last equation cannot hold, for $n$ odd. Next, we take $ \frac{1}{a}+ a^{2^t}=b$ (Case $(i)$), and so,
 \[
x^4+a  x^3+\frac{1}{a(c+1)} x^2+\frac{c}{c+1} x+\frac{ac}{c+1}=0.
 \]
  We will find some values of $a,c$ such that the above polynomial can be factored as
  \[
  x^4+a  x^3+\frac{1}{a(c+1)} x^2+\frac{c}{c+1} x+\frac{ac}{c+1}=(x^2+A x+a)\left(x^2+B x+\frac{c}{c+1}\right).
  \]
 Solving the obtained system, we find that 
 \begin{align*}
 & A=\frac{a^2(c+1)+c}{a(c+1)+c},\ B=\frac{(a+1)c}{a(c+1)+c},\\
& \text{ when } c=\left(\frac{a^3+a^2}{a^3+a^2+1}\right)^{1/2}=1+ \frac{1}{(a^3+a^2+1)^{\frac12}}.
 \end{align*}
 Moreover, each factor in the factorization above has two distinct roots (when $AB\neq0$)  if $\Tr\left(\frac{a}{A^2}\right)=\Tr\left(\frac{a^2}{a^2+a+1}\right)=0$ and $\Tr\left(\frac{c}{(c+1)B^2}\right)=\Tr\left( \frac{a^4}{(a+1)^5}\right)=0$. Under the assumption that there are values of   $a\neq 1$ for $n$ odd such that both of these traces are~$0$ (computation reveals that it always happens, but we have been unable to show that in general), the claim is shown.
  \end{proof}
   
   \begin{rem}
   There are many values of $c$ such that the $c$-differential uniformity is maximum, that is, $\delta_{G,c}=5$, $G(x)=x^{2^n-2}+x^2$\textup{:} for example, for $n$ even, any cube $1\neq c$ in $\F_{2^n}$ will do\textup{;} if $n$ is odd, we gave some examples, and computation reveals that there are additional $c$'s.
   \end{rem}

   \section{Perturbations of the inverse function via  linearized polynomials}
   \label{sec4}
   
   As one of the referees suggested, one might wonder what happens if we perturb the inverse function via some other functions. Surely,  equations over finite fields can be very difficult to handle. However, we can get a general result, albeit not as clean as the ones we have already shown, if we perturb the inverse function via a linearized polynomial.
   
   We will need below some more definitions~\cite{LN97}.
   Let $G$ be the Gauss' sum $\displaystyle G(\psi,\chi)=\sum_{z\in\F_q^*} \psi(z)\chi(z)$, where $\chi,\psi$, are additive, respectively, multiplicative characters of $\F_q$, $q=p^n$. Below, we let $\chi_1(a)=\exp\left(\frac{2\pi i\Trn(a)}{q}\right)$ be the principal additive character, and $\psi_k\left(g^\ell\right)=\exp\left(\frac{2\pi ik\ell}{q-1}\right)$ be the $k$-th multiplicative character of $\F_q$, $0\leq k\leq q-2$. 
   
 We take $s_1,\ldots,s_k$ to be the indices $i$ where $a_i\neq 0$ in the linearized polynomial $L$ and $\delta=\gcd(s_1,\ldots,s_k,n)$; also, $p^{\delta \gamma_\alpha}$ is the number of solutions of $T_n(w)=0$, defined in~\eqref{eq:Tn}; as customary, we denote divisibility by a vertical bar.   The lower bound of the theorem below holds under   the following technical condition (see~\cite[Thms. 1.5 and 1.6]{DM02}):
   \begin{equation}
   \begin{split}
  \label{eq:C2}
 & n=2m,  n/\delta\text{ is even}, 2\delta\, |\, s_i-s_j, 4<p^\delta+1|p^{s_i}+1.
   \end{split}
   \end{equation}
   
   \begin{thm}
   \label{thm:mainthm2}
    Let $p$ be an odd prime number, $n\geq 4$, $F(x)=x^{p^n-2}$ be the inverse function on $\F_{p^n}$, and $1\neq c\in\F_{p^n}$. 
    Let $L(x)=\sum_{i=0}^{n-1} a_i x^{p^i}$ be a linearized polynomial.
    Then, the $c$-differential uniformity, $\delta_{G,c}$, of $G(x)=F(x)+L(x)$ satisfies\textup{:}
\begin{itemize}
\item[$(i)$]  $\delta_{G,c}\leq (pN)^{\frac{n}{2}}$, where $N=\max_{\alpha\in\F_q} \{N_\alpha\}$, and $N_\alpha$ is the number of solutions $w$ to
 \begin{equation}
 \label{eq:Tn}
T_n(w)=2\alpha a_0  w+\sum_{i=1}^{n-1} \left((\alpha a_i)^{p^{-i}} w^{p^i}+ (\alpha a_i)^{p^{-2i}} w^{p^{-i}}\right)=0.
\end{equation}
  In fact, $N_\alpha=p^{\delta \gamma_\alpha}$, for some nonnegative integer $\gamma_\alpha$.
\item[$(ii)$] $\displaystyle \delta_{G,c}\geq \frac{1}{p^n} \sum_{\alpha\in\F_q} \chi_1(\alpha) \mu_\alpha p^{\frac{\delta \gamma_\alpha}{2}}$, under the same conditions of ~\eqref{eq:C2}, where $\gamma_\alpha$ is defined in~$(i)$ and $\mu_\alpha=\pm 1$ is the sign of the Weil sum from Equation~\eqref{eq:weil_mod}. Even more precisely,
\[
\delta_{G,c}\geq    (-1)^{\frac{m}{\delta}} p^{-m} \sum_{\substack{\alpha\in\F_q\\ \cN_\alpha=1}}\chi_1(\alpha) +  (-1)^{\frac{m}{\delta}} p^{-m}   \sum_{\substack{\alpha\in\F_q\\ \cN_\alpha>1}} \chi_1(\alpha) (-1)^{\frac{\gamma_\alpha}{2}}p^{\frac{\delta \gamma_\alpha}{2}}.
\]
\end{itemize}
   \end{thm}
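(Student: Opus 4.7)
The plan is the standard character-sum representation of the solution count, followed by Weil-type estimates on the inner exponential sum for the upper bound in (i), and exact Weil-sum evaluations of~\cite{DM02} for the lower bound in~(ii).

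First, use linearity of $L$ to rewrite the defining equation as
\[
(x+a)^{p^n-2}-c\,x^{p^n-2}+(1-c)L(x)=b-L(a),
\]
and apply additive-character orthogonality over $\F_{p^n}$ to obtain
\[
N_G(a,b)=1+\frac{1}{p^n}\sum_{\alpha\in\F_{p^n}^*}\chi_1(\alpha(L(a)-b))\,S_\alpha(a),
\]
where $S_\alpha(a)=\sum_{x}\chi_1\bigl(\alpha[(x+a)^{p^n-2}-c\,x^{p^n-2}+(1-c)L(x)]\bigr)$. The whole theorem then reduces to bounding $|S_\alpha(a)|$ uniformly for~(i), and to evaluating $S_\alpha$ precisely under~\eqref{eq:C2} for~(ii).

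Second, I convert the linearized piece to a linear character: by $\Tr(\beta x^{p^i})=\Tr(\beta^{p^{-i}}x)$ applied term-by-term, $\chi_1(\alpha(1-c)L(x))$ equals $\chi_1(\gamma\,x)$ for an explicit $\gamma$ depending on $\alpha,c$ and the $a_i$. After splitting off the explicit boundary contributions at $x\in\{0,-a\}$ and substituting $x^{p^n-2}=1/x$ on $x\ne 0$, the remaining Weil sum is rational-plus-linear in~$x$. Computing $|S_\alpha(a)|^2=\sum_{x,y}\chi_1(\cdots)$ and substituting $w=y-x$, the rational part contributes an exact translation derivative whose $x$-sum cancels (outside a lower-order boundary contribution), while the linearized part produces a bilinear $\F_p$-pairing $\Tr(w\cdot T_n(y))$. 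A direct identification checks that the linearized polynomial produced is exactly the $T_n$ of the theorem, namely the self-adjoint symmetrization under the trace pairing of $M(w)=\alpha a_0 w+\sum_{i>0}\alpha a_i w^{p^i}$ (the polarization of the associated Dembowski--Ostrom form $Q(w)=\alpha a_0 w^2+\sum_{i>0}(\alpha a_i)^{p^{-i}}w^{p^i+1}$). The inner sum over $y$ then vanishes unless $T_n(w)=0$, giving $|S_\alpha|^2\le N_\alpha\cdot p^n$. Because the nonzero exponents in $L$ share the common divisor $\delta$ with $n$, the kernel of $T_n$ is an $\F_{p^\delta}$-subspace, so $N_\alpha=p^{\delta\gamma_\alpha}$; the upper bound in~(i) follows by the triangle inequality applied to the character-sum formula.

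For~(ii), I invoke Theorems~1.5 and~1.6 of~\cite{DM02}, which under~\eqref{eq:C2} evaluate $S_\alpha$ exactly as $\mu_\alpha(-1)^{m/\delta}p^{m}\,p^{\delta\gamma_\alpha/2}$ for an explicit sign $\mu_\alpha\in\{\pm 1\}$. Specializing $(a,b)$ so that $\chi_1(\alpha(L(a)-b))=\chi_1(\alpha)$ (for instance $a=0$, $b=-1$) and substituting into the Step~1 formula produces a lower bound on $\delta_{G,c}$. Splitting the resulting sum according to whether $\cN_\alpha=1$ or $\cN_\alpha>1$ (i.e.\ $\gamma_\alpha=0$ or not) yields the refined expression in the statement.

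The main obstacle is the kernel-identification step in the proof of~(i): one must track carefully how the inverse-function rational contribution and the linearized contribution interact inside $|S_\alpha|^2$, confirming that the leading order is indeed governed by the kernel of the specific polynomial $T_n$ and not corrupted by a Kloosterman-type remainder. A secondary difficulty is aligning the sign conventions and hypotheses of~\cite{DM02} with the present setup so as to extract the $\mu_\alpha$ and the $(-1)^{m/\delta}$ prefactors in the stated lower bound.
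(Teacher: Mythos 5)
There is a genuine gap, and it sits exactly at the step you flag as the ``main obstacle.'' Your plan applies character orthogonality directly to the unreduced functional equation and then converts the linearized piece via the trace adjoint, so that $\chi_1\bigl(\alpha(1-c)L(x)\bigr)=\chi_1(\gamma x)$ becomes a purely \emph{linear} character in $x$. But the polynomial $T_n$ in the statement is the polarization of the quadratic (Dembowski--Ostrom) form $w\mapsto \Tr(\alpha\, wL(w))$; a linear phase $\chi_1(\gamma x)$ can never generate the bilinear pairing $\Tr(w\,T_n(y))$ in the second-moment computation --- it only contributes a constant (in $y$) factor $\chi_1(\gamma w)$. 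Meanwhile the inverse-function part left inside your $S_\alpha(a)$ is a Kloosterman-type sum: it does not ``cancel as an exact translation derivative,'' and in particular Theorems 1.5 and 1.6 of \cite{DM02}, which evaluate Weil sums of pure DO polynomials $\sum_i a_i x^{p^i+1}$, simply do not apply to a sum containing $x^{p^n-2}$ in the argument of the character. So neither the $T_n$-kernel identification in (i) nor the exact evaluation invoked in (ii) is available along your route.

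The paper's proof avoids this by an algebraic reduction \emph{before} introducing characters: it restricts to $a=0$ (and $b=0$ for the bound of interest, since one only needs the maximum entry of the $c$-DDT) and multiplies the equation $x^{p^n-2}+(1-c)^{-1}(\text{linear terms})=\cdots$ by $x$, which eliminates the inverse function entirely and leaves the genuine polynomial equation $xL(x)+\frac{b}{c-1}x+1=0$. Orthogonality is then applied to \emph{that} equation, producing $\sum_{\alpha}\chi_1(\alpha)\sum_x\chi_1(\alpha\,xL(x))$, i.e.\ a Weil sum of the DO polynomial $\alpha xL(x)$, to which Mills' results apply verbatim: the modulus is $\sqrt{qN_\alpha}$ with $N_\alpha=\cardinality{\ker T_n}=p^{\delta\gamma_\alpha}$, giving (i), and under the hypotheses~\eqref{eq:C2} the sum is real with the explicit sign $(-1)^{m/\delta+\gamma_\alpha/2}$, giving (ii). If you want to salvage your approach, you must perform this clearing of denominators first; otherwise you are estimating a different (Kloosterman-twisted) sum for which the stated bounds in terms of $N_\alpha$ have not been established.
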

   \begin{proof}
   One might wonder if the argument of Theorem~\ref{main:thm} will go similarly in this case.  We will quickly go through Cases $(i)$ and $(ii)$, but the method of Theorem~\ref{main:thm}  will fail after that. The $c$-differential equation is now
   \begin{equation}
   \label{eq:new1}
   (x+a)^{p^n-2}+L(x+a)-cx^{p^n-2}-c L(x)=b.
   \end{equation}
   If $a\neq 0,x=0$, then Equation~\eqref{eq:new1} becomes $a^{-1}+L(a)=b$. Therefore, for any $c$, and $b=a^{-1}+L(a)$, we have a solution of~\eqref{eq:new1}. If $x=-a\neq 0$, then Equation~\eqref{eq:new1} transforms into $c\left(a^{-1}+L(a)\right)=b$, which gives us one more solution of~\eqref{eq:new1}, when $aL(a)+1=0$ and $b=0$ ($c$ is arbitrary). If $0\neq x\neq -a$, then Equation~\eqref{eq:new1} becomes
   \[
   (1-c) x(x+a)L(x)-(b-L(a))x(x+a)+(1-c) x -ca =0,
   \]
   which has at most $\deg L+2$ solutions. 
   
   Take now $a=0$, and obtain
   \begin{equation}
   \label{eq:new2}
   x\left(xL(x)-\frac{b}{1-c} x+1\right)=0.
   \end{equation}
   Unfortunately, this is the point where the method of Theorem~\ref{main:thm} stops being useful, since we do not have any simple method to find the  number of solutions of an equation involving a more general Dembowski-Ostrom (DO) polynomial. We have, however, found a way to get meaningful results by moving into the world of characters as we did in~\cite{S20_Weil} for the boomerang uniformity and {the extended} $c$-boomerang uniformity. 
   
   As done in~\cite{S20_Weil},  
  the number  $N(b)$  of of solutions $(x_1,\ldots,x_n)\in\F_q^n$ ($b$ is fixed) of an equation $f(x_1,\ldots,x_n)=b$ is
\begin{align*}
\cN(b)
&= \frac{1}{q}\sum_{x_1,\ldots,x_n\in \F_q}\sum_{\alpha\in\F_q} \chi_1\left(\alpha \left( f(x_1,\ldots,x_n)- b\right)\right).
\end{align*}
Thus, in addition to $x=0$, the number of solutions $\cN_{b;c}$ of  Equation~\eqref{eq:new2} is given by 
\[
q\cN_{b;c}= \sum_{\alpha\in \F_q} \sum_{x\in \F_q}\chi_1\left(\alpha(xL(x) + \frac{b}{c-1} x+1)\right).
\]
We can perhaps treat {the} $b\neq 0$ case (under some conditions -- and we already did consider such instances in Theorem~\ref{main:thm}, for a monomial $L$), as well, but for our purpose we do not need to, since we are interested in the maximum value in the $c$-DDT, so we simply take $b=0$. The equation above becomes
\[
q\cN_{b;c}= \sum_{\alpha\in \F_q} \chi_1(\alpha) \sum_{x\in \F_q}\chi_1\left(\alpha xL(x) \right).
\]
We now use~\cite{DM02} where it was shown that the Weil sum of a DO polynomial $\sum_{i=0}^{n-1} a_i x^{p^i+1}$ (we simplify his notations, taking $\alpha_i=0,\beta_i=0$, to match our  DO polynomial $xL(x)$) is
\begin{equation}
\label{eq:weil_mod}
\left|\sum_{x\in\F_q} \chi_1\left(\sum_{i=0}^{n-1} a_i x^{p^i+1}  \right)\right|=\sqrt{qN_\alpha},
\end{equation}
where $N_\alpha$ is the number of solutions for $T_n(w)=0$, where
\[
T_n(w)=2A_0  w+\sum_{i=1}^{n-1} \left(A_i w^{p^i}+(A_iw)^{p^{n-i}}\right),
\]
with  $A_i =(\alpha a_i)^{p^{n-i}}$. 
In fact, more precisely, given $\epsilon=\gcd_{1\leq i\leq n-1} \{2s_0,s_0+s_i,s_0+n-s_i,n \}$, the number of solutions to $T_n(w)=0$ is $p^{\gamma_\alpha\epsilon}$, for some nonnegative integer $\gamma_\alpha$.
Applying this to our DO polynomial $\alpha xL(x)$ we get we first claim, that is,
$\delta_{G,c}\leq \sqrt{qN}$, $N=\max_{\alpha\in\F_q}\{ N_\alpha\}$.

We now assume that  $L(x)=\sum_{i=0}^{n-1} a_i x^{p^i}$ and conditions~\eqref{eq:C2} hold.
It was also shown in~\cite[Theorem 1.5]{DM02}  that if $n$ is even, $\delta=\gcd(s_1,\ldots,s_k)$ and $p^\delta+1=p+1$ divides $(p^i+1)$, then the above Weil sum is real and consequently, it is equal to $\mu_\alpha \sqrt{p^{n+\gamma}}$, where $\mu_\alpha=\pm 1$ and $\gamma$ is a nonnegative integer. Therefore, 
\[
\delta_{G,c}\geq \frac{1}{p^n} \sum_{\alpha\in\F_q} \chi_1(\alpha) \mu_\alpha p^{\frac{\gamma_\alpha}{2}}.
\]
We can be more precise and describe $\mu_\alpha$.  Let $S_\alpha=\sum_{x\in\F_q} \chi_1\left(\sum_{i=0}^{n-1} a_i x^{p^i+1}  \right)$. By~\cite[Theorem 1.6]{DM02}, for every $\alpha$, if $ \gamma_\alpha=0$, then $S_\alpha=(-1)^{\frac{m}{\delta}} p^m$, and if $ \gamma_\alpha>0$, then $\gamma_\alpha$ is even and $S_\alpha=(-1)^{\frac{m}{\delta}+\frac{\gamma_\alpha}{2}} p^{m+\frac{\delta \gamma_\alpha}{2}}$.
Thus,
\[
\delta_{G,c}\geq    (-1)^{\frac{m}{\delta}} p^{-m} \sum_{\alpha\in\F_q,\cN_\alpha=1}\chi_1(\alpha) +  (-1)^{\frac{m}{\delta}} p^{-m}\sum_{\alpha\in\F_q,\cN_\alpha>1}\chi_1(\alpha) (-1)^{\frac{\gamma_\alpha}{2}}p^{\frac{\delta \gamma_\alpha}{2}}.
\]
Our theorem is shown.
      \end{proof}

 \section{Some computations on other Sboxes}
 \label{sec5}
 Using SageMath, we recovered the univariate representation of several known cipher Sboxes and ran some computations to see how the $c$-differential uniformity ($c$DU) would change under linearized monomial perturbations.  The ``$c$DU'' column gives the maximum value for $c\neq 1$ and the ``$c$DU w/ linearized monomial'' column represents the maximum value when a monomial of the form $x^{2^i}$ for $0\leq i \leq n-1$ is added to the univariate Sbox polynomial. 
 \begin{center}
\begin{tabular}{ |c|c|c|c| } 
 \hline
 Sbox ($n$-bits) & DU & $c$DU & $c$DU w/ linearized monomial \\ 
 \hline
 Rectangle (4)& 4 & 5 & 7 \\ 
 Serpent-3 (4) & 4 & 6 & 5 \\
 APN (6) & 2 & 6 & 9 \\ 
 Fides (6) & 2 & 7 & 7 \\
 AES (8) & 4 & 9 & 9 \\
 Skipjack (8) & 12 & 8 & 9 \\
 \hline
\end{tabular}
\end{center}

While  {perhaps }not as {pronounced} as Theorem~\ref{main:thm} on the inverse function, the results are interesting in several ways.  In some cases adding a linearized monomial can increase the $c$-differential uniformity to values more than 4 times the regular differential uniformity (see the ``DU'' column) (e.g., to a nontrivial 9/64 bits in APN 6).  In others, adding a monomial keeps the $c$-differential uniformity the same (e.g., Fides 6) or even drops (e.g., Sbox \#3 of Serpent).  
 
  \section{Concluding remarks}
\label{sec6}

In this paper we investigate the $c$-differential  uniformity of some inverse EA-equivalent functions. We show that their $c$-differential uniformity spectrum tends to increase significantly for some $c$, which we believe is not a desirable feature as it indicates some degree of non-randomness.  We also consider arbitrary  linearized polynomials and provide some bounds based upon character sums. We also  provide some computations on some known cipher Sboxes.  Surely, it would be interesting to continue this investigation into more general affine transformations or to consider other functions under EA perturbations and investigate their $c$-differential uniformity.  
    
 \vskip.3cm
\noindent
{\bf Acknowledgments}. The authors   would like to express their sincere appreciation for the reviewers’ careful reading, beneficial comments and suggestions, and to the editors for the prompt handling of our paper.

 \end{document}